\documentclass[11pt,a4paper,reqno]{amsart}
\usepackage{amsmath,amsthm,amsfonts,amssymb,bbm}
\usepackage{graphicx,psfrag,subfigure,color}
\usepackage{cite,stackrel}
\usepackage{hyperref}

\numberwithin{equation}{section}

\newtheorem{prop}{Proposition}[section]
\newtheorem{thm}[prop]{Theorem}

\newtheorem{conj}[prop]{Conjecture}
\newtheorem{cla}[prop]{Claim}

\newtheorem{rem}[prop]{Remark}

\title[]{Two-dimensional Anisotropic KPZ growth  and  limit shapes}
\author{Alexei Borodin} \address{Department of Mathematics, MIT, Cambridge, USA, and Institute for Information
Transmission Problems, Moscow, Russia. E-mail: {\tt
    borodin@math.mit.edu}}
\author{Fabio Toninelli} \address{Univ Lyon, CNRS, Universit\'e Claude Bernard Lyon 1, UMR5208, Institut Camille Jordan, F-69622 Villeurbanne, France. E-mail: {\tt
    toninelli@math.univ-lyon1.fr}} \date{}

\begin{document}

\maketitle

\begin{abstract}
  A series of recent works focused on two-dimensional interface growth
  models in the so-called Anisotropic KPZ (AKPZ) universality class,
  that have a large-scale behavior similar to that of the
  Edwards-Wilkinson equation.  In agreement with the scenario
  conjectured by D. Wolf \cite{Wolf}, in all known AKPZ examples the
  function $v(\rho)$ giving the growth velocity as a function of the
  slope $\rho$ has a Hessian with negative determinant (``AKPZ
  signature''). While up to now negativity was verified model by model
  via explicit computations, in this work we show that it actually has
  a simple geometric origin in the fact that the hydrodynamic PDEs
  associated to these \emph{non-equilibrium} growth models preserves
  the Euler-Lagrange equations determining the macroscopic shapes of
  certain \emph{equilibrium} two-dimensional interface models.  In the
  case of growth processes defined via dynamics of dimer models on
  planar lattices, we further prove that the preservation of the
  Euler-Lagrange equations is equivalent to
  harmonicity of  $v$ with respect to  a natural complex structure.  \\ \\
  2010 \textit{Mathematics Subject Classification: 82C20, 60J10,
    60K35, 82C24}
  \\
  \\
  \textit{Keywords: Anisotropic KPZ universality class, growth models,
    Euler-Lagrange equation, dimer model, complex Burgers equation}
\end{abstract}

\section{Introduction}
In this work we consider stochastic interface growth models,
\emph{i.e.,} irreversible Markov dynamics of $d$-dimensional height functions
$\{h_x\}_{x\in \mathbb Z^d}$. For reasons that we explain below, we
are especially interested in the case $d=2$. We refer the reader,
\emph{e.g.,} to \cite{BarSta} for physical motivations: propagation of fronts
in combustion, crystal growth, etc.  There are various quantities of
physical interest related to the large-scale evolution of such
interfaces. An obvious one is the asymptotic speed of growth $v$,
\emph{i.e.,} the average increase of interface height in unit time, for large
times. Actually, the speed of growth is in general a slope-dependent
quantity $v=v(\rho)$, where $\rho\in \mathbb R^d$ is the 
local average interface slope. Other natural quantities are fluctuation exponents.  In
the long-time limit $t\to\infty$, the law of the height gradients
$\{h_x(t)-h_y(t)\}_{x,y\in \mathbb Z^d}$ of an initially flat profile
of slope $\rho$ is expected to converge to a (non-reversible) stationary state
$\pi_\rho$.  The standard deviation of the height difference
$(h_x-h_{y})$ at stationarity is expected to behave like
$const.\times(1+|x-y|^\alpha)$ for large $|x-y|$, with $\alpha$ the
\emph{roughness exponent}. Similarly, the \emph{growth exponent}
$\beta$ is defined\footnote{More precisely, one
  should look at the standard deviation of $h_{x(t)}(t)-h_x(0)$ with
  $x(t)=x-D v(\rho)t$, where $D v(\rho)\in\mathbb R^d$ is the
  differential of $v(\cdot)$ computed at $\rho$. In other words, to
  correctly define the growth exponent one has to choose a reference
  frame moving along the line $(x(t),t)$ in
  space-time, that is the characteristic line of the PDE \eqref{eq:evoluz}. } so that the standard deviation of $(h_x(t)-h_x(0))$
grows like $t^\beta $ for large $t$.

The two-dimensional case $d=2$ is particularly interesting: it
is expected that there is a rich interplay between the convexity
properties of $v$ and the triviality or not of the critical exponents,
where triviality means that they coincide with those of the linear
Edwards-Wilkinson (EW) equation, recalled below. (Let us recall that
$\alpha_{EW}(d)=(2-d)/2,\beta_{EW}(d)=(2-d)/4$ and that for $d=2$
fluctuation growth in space of time for the EW equation is
logarithmic.)  The following is expected:
\begin{conj}
  \label{conj}
 If the
Hessian matrix $D^2v(\rho)$ of $v(\rho)$ has two eigenvalues of the same
sign, \emph{i.e.,} if $\det(D^2v(\rho))>0$, then
$\alpha\ne\alpha_{EW}(2)=0, \beta\ne\beta_{EW}=0$, while equalities
hold if $ \det(D^2 v(\rho))\le 0$. 
\end{conj}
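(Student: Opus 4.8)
The plan is to reduce the question about fluctuation exponents to the analysis of the single stochastic PDE governing the height fluctuations around a flat profile of slope $\rho$, and then to read off $\alpha,\beta$ from a renormalization-group analysis of that PDE. First I would write $h(x,t)=v(\rho)t+\rho\cdot x+u(x,t)$ and expand the hydrodynamic drift $v(\rho+\nabla u)$ to second order in the fluctuation field $u$. Passing to the comoving frame $x\mapsto x-Dv(\rho)t$ — exactly the characteristic line singled out in the footnote — cancels the linear transport term $Dv(\rho)\cdot\nabla u$, and one is left with a two-dimensional anisotropic KPZ equation
\begin{equation}
  \partial_t u = \nabla\cdot(\mathcal A\,\nabla u) + \tfrac12\,\langle \nabla u, \Lambda\,\nabla u\rangle + \eta,
\end{equation}
with $\mathcal A$ a positive-definite effective mobility/surface-tension matrix, $\eta$ a space-time noise, and a \emph{constant} nonlinearity matrix $\Lambda:=D^2 v(\rho)$.

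Next I would reduce the pair $(\mathcal A,\Lambda)$ to canonical form. A linear change of the spatial coordinates $x\mapsto Mx$ acts on both matrices by the same congruence, $\mathcal A\mapsto M^{-1}\mathcal A M^{-T}$ and $\Lambda\mapsto M^{-1}\Lambda M^{-T}$; choosing $M$ so that $\mathcal A$ becomes the identity turns the linear part into the ordinary Laplacian, after which a rotation diagonalizes the nonlinearity as $\mathrm{diag}(\lambda_1,\lambda_2)$. Because congruence multiplies determinants by $\det(M)^{-2}>0$, the sign of $\lambda_1\lambda_2$ equals the sign of $\det\Lambda=\det D^2 v(\rho)$: it is positive exactly at an elliptic point (both principal curvatures of $v$ of the same sign) and negative at a saddle, which is the genuinely anisotropic case.

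The heart of the argument is the dynamic RG flow in the marginal dimension $d=2$. Power counting shows the nonlinearity is \emph{marginal}, so its fate is decided at one loop. Carrying out the one-loop dynamic renormalization group (the computation underlying Wolf's scenario \cite{Wolf}), one finds that the renormalization of the coupling is governed by the product $\lambda_1\lambda_2$, i.e.\ by $\det D^2 v(\rho)$: when $\det D^2 v(\rho)>0$ the nonlinearity is marginally \emph{relevant}, grows under coarse-graining, and drives the flow to the strong-coupling fixed point, yielding $\alpha\ne0,\beta\ne0$; when $\det D^2 v(\rho)<0$ it is marginally \emph{irrelevant}, flows logarithmically back to the Gaussian Edwards--Wilkinson fixed point, and one recovers $\alpha=\beta=0$ (with at most logarithmic corrections to the height variance). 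The degenerate case $\det D^2 v(\rho)=0$ kills one eigenvalue of the nonlinearity and again lands in the EW class.

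I expect the main obstacle to be turning this RG picture into a theorem. The strong-coupling two-dimensional KPZ fixed point is not mathematically under control — even establishing $\beta>0$ for the isotropic $d=2$ KPZ equation is open — so the ``$\det>0\Rightarrow$ nontrivial'' half is out of rigorous reach. On the anisotropic side, justifying that a marginally irrelevant nonlinearity truly leaves the EW exponents intact, rather than generating anomalous logarithms that would spoil $\alpha=\beta=0$, requires controlling the cumulative effect of the flow and is itself delicate. This is precisely why the statement is recorded as a \emph{conjecture}: what can be established rigorously, model by model and — as the present paper shows — for a structural geometric reason, is the input $\det D^2 v(\rho)\le0$, not the resulting identities for the exponents.
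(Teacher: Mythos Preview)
The statement you are addressing is labeled \emph{Conjecture} in the paper, and the paper contains no proof of it; on the contrary, the authors write explicitly that apart from Wolf's RG computations they ``are not aware of any sound theoretical, let alone rigorous, argument supporting Conjecture~\ref{conj}.'' So there is no proof in the paper to compare your proposal against.

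What you have written is essentially a fleshed-out version of exactly the heuristic the paper invokes as the \emph{origin} of the conjecture: expand around a flat profile, pass to the comoving frame, and run Wolf's one-loop dynamic RG to see that the sign of $\det D^2 v(\rho)$ decides marginal relevance versus irrelevance of the nonlinearity. You correctly identify, in your last paragraph, that neither direction can currently be made rigorous --- the strong-coupling side because the 2d isotropic KPZ fixed point is not under mathematical control, the weak-coupling side because controlling a marginally irrelevant perturbation all the way down is itself delicate. That self-assessment is accurate and matches the paper's own stance.

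So your proposal is not a proof and does not claim to be one; it is the standard physics argument motivating the conjecture, which is precisely how the paper treats it. The paper's actual rigorous contribution lies elsewhere (Theorems~\ref{thm1}, \ref{thm4}, \ref{thm5}): it shows that preservation of equilibrium Euler--Lagrange equations by the hydrodynamic PDE forces $\det D^2 v\le 0$, thereby explaining structurally why known AKPZ models carry the anisotropic signature --- but it does not attempt to prove the conjecture about exponents.
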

In the former case the growth model is said to belong to the Isotropic
KPZ class and to the Anisotropic KPZ (AKPZ) class in the latter. This
situation is in contrast with that of $d=1$ growth models where it is
believed \cite{KPZ}, and mathematically proven in several concrete
examples \cite{KPZrev}, that $ \beta=1/3\ne\beta_{EW}(1)=1/4$ as soon
as $v''(\rho)\ne0$.

Conjecture \ref{conj} originated from the work \cite{Wolf} by D. Wolf,
who studied the large-scale behavior of the two-dimensional KPZ equation
\begin{eqnarray}
  \label{eq:KPZ2}
  \partial_t h(x,t)=\Delta h(x,t)+\lambda(\nabla h(x,t),H \nabla h(x,t))+ \dot W(x,t), \quad x\in \mathbb R^2, 
\end{eqnarray}
via (perturbative in $\lambda$) Renormalization Group (RG)
arguments. In this equation, $\dot W$ is a (suitably regularized)
space-time white noise, $\lambda$ is a real parameter tuning the
strength of the non-linearity, and $H$ is a real symmetric $2\times 2$
matrix. This equation reduces to the EW equation when $\lambda=0$, and
it is believed to capture the large-scale behavior of the growth model
if $H$ is chosen to be $D^2 v(\rho)$ as above. The outcome of Wolf's
work is that for $\det(H)>0$ the non-linearity is relevant (\emph{i.e.,}
the large-scale properties of the solution of \eqref{eq:KPZ2} differ
from those of the EW equation as soon as $\lambda\ne0$), while if
$\det(D^2 v(\rho))\le 0$ and $\lambda$ is small enough then the non-linearity is irrelevant.

Wolf's predicted relation between the sign of $\det(D^2 v(\rho))$ and
the non-triviality of the exponents $\alpha,\beta$ has been
successfully tested numerically both on the KPZ equation itself
\cite{HHA92} and on microscopic growth models \cite{TFW}. In
particular, for models in the Isotropic KPZ class, exponents
$\alpha,\beta$ are known with large numerical precision and they seem
to be universal \cite{Num2}. There are also rigorous results about
some specific AKPZ growth models, that we briefly review in Section
\ref{sec:mathAKPZ}.  Apart from Wolf's RG computations, however, we
are not aware of any sound theoretical, let alone rigorous, argument
supporting Conjecture \ref{conj}. Even for those AKPZ growth models of
Section \ref{sec:mathAKPZ} for which logarithmic fluctuation growth
can be proved, the presently existing proofs of
$\det (D^2 v(\rho))\le 0$ are based on explicit computations that give
no intuition on the underlying mechanism. The goal of the present work
is to shed new light on this issue.

Before coming to that, let us mention that there are various growth
processes for which one can prove, via a sub-additivity argument
\cite{Seppa}, that the speed $v$ is a convex function of $\rho$. This
includes the cube-stacking model of \cite{TFW}, that is equivalent to
Glauber dynamics of interfaces of the three-dimensional Ising model
at zero temperature and positive magnetic field.  Convexity of $v$
makes such models natural candidates for representatives of the
Isotropic KPZ universality class. However, apart from refined
numerics, nothing is known rigorously on their stationary states and their
fluctuation exponents.  As we argue at the end of Section
\ref{sec:risultati}, we have reasons to believe that their stationary
states $\pi_\rho$ are not of Gibbs type.

\subsection{Previous results on AKPZ models}
\label{sec:mathAKPZ}
 Here we briefly review
some mathematical results on AKPZ growth models; see also \cite{Ticm}
for a recent review.  The first result we are aware of in this
direction is the study  \cite{PS} of the
Gates-Westcott evolution \cite{GW}, where space is continuous in one
direction and discrete in the other. This growth model can be seen as
a collection of mutually interacting, one-dimensional PolyNuclear
Growth models. The authors of \cite{PS} observed that evolution
preserves a family of translation-invariant Gibbs measures that can be
described in the free-fermionic language and, therefore, have
determinantal correlations. This allowed them to prove that stationary
fluctuations are logarithmic (whence $\alpha=0$) and to compute the
speed $v(\rho)$, which by direct inspection turns out to satisfy
$\det(D^2 v(\rho))<0$.

The other AKPZ growth models we mention below have the common feature
of being defined in terms of dimer coverings of bipartite planar
lattices $G$ (in particular, the honeycomb lattice and the square
grid) or, equivalently, in terms of random tilings of the plane. It is a
well-known fact \cite{Kenyonnotes} that dimer coverings of $G$ are in
bijection with integer-valued height functions defined on the dual
graph $G^*$. See, \emph{e.g.,} Fig. \ref{fig:mapping} for lozenge tilings.
\begin{figure}
\begin{center}
\includegraphics[height=5cm]{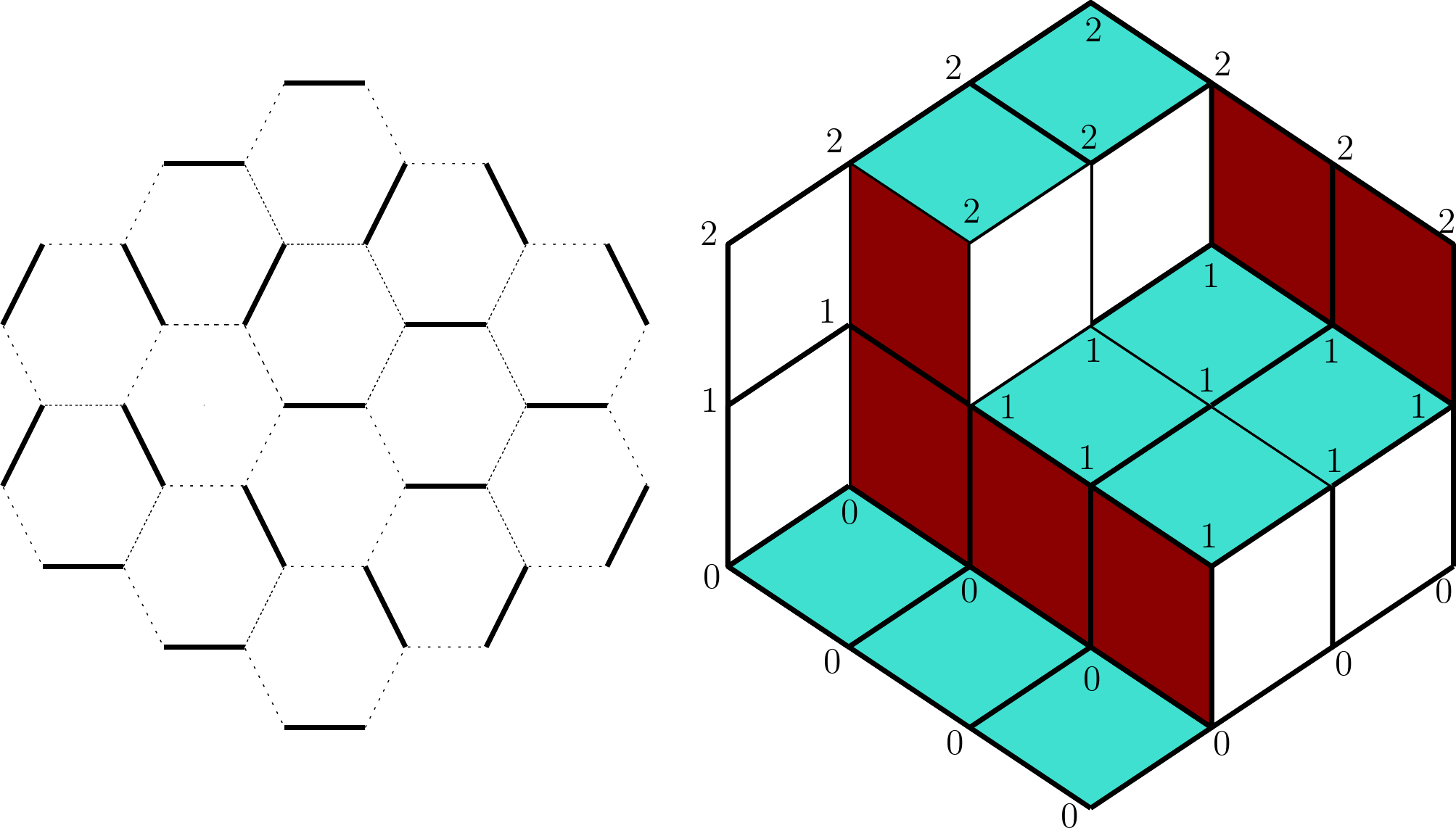}
\caption{A dimer covering of a domain of  the honeycomb lattice (left) and the
  corresponding lozenge tiling (right). Near each lozenge vertex is
  given the height of the interface w.r.t. the horizontal plane. }
\label{fig:mapping}
\end{center}
\end{figure}

A general mechanism for obtaining AKPZ two-dimensional growth
processes was suggested in \cite{BF08}, where the case of a particular
continuous-time dynamics on the lozenge tilings of a half-plane was
considered in detail as the main example (see also \cite{BFjstat}). In
this case, logarithmic growth of fluctuations in time and space was
proven (whence $\alpha=\beta=0$), the speed was computed, and the AKPZ
signature $\det(D^2 v(\rho))<0$ was verified.  The growth models
generated by the procedure of \cite{BF08} have the property of
preserving families of Gibbs measures in certain unbounded regions of
the plane.  One can also prove that full-plane translation invariant
Gibbs measures $\pi_\rho$ are stationary \cite{T15} (this requires
extra non-trivial work because these dynamics involve unbounded
particle jumps, which might cause the process to be ill-defined).  A
particular modification designed to preserve Gibbs measures inside
bounded regions (hexagonal regions for lozenge tilings) was also
suggested at the same time \cite{BG08}.  Other (discrete time)
dynamics for lozenge tilings that fit into the formalism of
\cite{BF08} were considered in \cite{BBO}; their speed of growth
$v(\rho)$ was computed there, and one can check that
$\det(D^2 v(\rho))< 0$. Let us mention that an exposition of the
general construction of \cite{BF08} in terms of the formalism of Schur
processes is available in \cite{B10} and another example for dimers on a square-hexagon lattice can be found in \cite{BF15}.

Another two-dimensional dynamics that recently attracted attention in
the context of AKPZ growth model is the \emph{domino shuffling
  algorithm}. While this evolution fits into the general framework of
\cite{BF08}, it was actually introduced much earlier in
\cite{Aztec1,Aztec2}, as an algorithm that perfectly samples domino
tilings of special domains of the plane called ``Aztec
diamonds''. However, the algorithm can be defined in more general
domains: in particular, in the whole plane.  Thanks to the
above-mentioned mapping between a domino tiling and its discrete
height function, the shuffling algorithm can be viewed as a
two-dimensional growth model.  In the whole plane, it preserves
translation-invariant Gibbs measures of domino tilings.  In the recent
work \cite{CT}, the shuffling algorithm was studied from the point of
view of interface growth, and it was proven to belong to the AKPZ
class, both in terms of vanishing of the growth exponents and in the
sense that $\det(D^2 v(\rho))<0$. An interesting fact observed in
\cite{CT} is that, if the underlying dimer model is given non-uniform
but periodic weights on $\mathbb Z^2$, the speed function $v(\rho)$
can be singular (non-differentiable) at isolated values of the slope
$\rho$ (the case of weights with period $2$ in both space directions is
worked out explicitly in \cite{CT}). At those slopes, growth exponents
$\alpha,\beta$ are still zero but the fluctuation variance is bounded
in time and space.

We conclude this (incomplete) list of results by mentioning two more
examples. The first one is an AKPZ growth model based on domino
tilings in the plane, defined in Section 3.1 of \cite{T15}, that at
first sight does not seem to fit into the general formalism of
\cite{BF08}. The proof of logarithmic growth fluctuations, implying
$\alpha=\beta=0$, as well as the computation of the speed of growth
and a direct verification that $\det(D^2 v(\rho))<0$ can be found in
\cite{CFT}. The second one is the so-called $q$-Whittaker process,
introduced in \cite{Macdo} in the wider context of Macdonald
processes. The $q$-Whittaker dynamics reduces to the growth process of
\cite{BF08,BFjstat} in the limit $q\to0$ and to a Gaussian growth
model (that has the same asymptotic large-scale behavior as the EW
equation) in the $q\to1$ limit \cite{q11,q12}. For every intermediate
value of $q$, it preserves certain Gibbs measures, both on the torus
\cite{CorT} and in certain unbounded domains of the plane. Logarithmic
growth of fluctuations is not proven but  conjectured.

For some of the models mentioned in this section, a hydrodynamic limit
has been proven: the height profile, rescaled as
$\epsilon h_{\epsilon^{-1}x}(\epsilon^{-1}t)$, converges as
$\epsilon\to0$ to a non-random limit profile $\bar h(x,t)$, that solves a
first-order non-linear PDE of Hamilton-Jacobi type. See for instance
\cite[Th. 1.2]{BF08} and \cite{LT} for a lozenge tiling dynamics, and
\cite{XZ} for the domino shuffling algorithm. Most likely, a similar
convergence holds for the other models as well.

\subsection{Preservation of Gibbs measures and of equilibrium shapes}
\label{sec:risultati}

A feature common to most (and possibly all) of the models cited in the
previous section is that they admit certain translation-invariant
Gibbs measures as stationary, but not reversible, measures for the
height gradients (the law of the height itself is not stationary,
since there is a non-zero average growth).  In almost all of these
models, Gibbs states have a determinantal (or free-fermion) structure,
but this is not a general feature of AKPZ models: the $q$-Whittaker
process is \emph{not} free-fermionic, for instance. The potential
associated to these Gibbs measures is local and completely explicit:
in several cases, for example for the lozenge dynamics of
\cite{BF08,BFjstat}, the Gibbs property means simply that,
conditionally on the interface configuration $h_{\Lambda^c}$ 
outside any finite domain $\Lambda$ of the lattice, the configuration
inside $\Lambda$ is uniformly distributed among all possible
configurations compatible with $h_{\Lambda^c}$. Actually, the
general construction of \cite{BF08} guarantees that these growth
models preserve also non-translation invariant Gibbs measures, in the
sense that if the initial condition is one such measure, then the law
of the interface at a later time is another such measure (in general,
a different one). See \cite[Fig. 1.3]{BF08} for a configuration
sampled from one of these measure for lozenge tilings.  Another example of
preservation of non-translation-invariant Gibbs measure is provided by
the domino shuffling algorithm: if at time $0$ the configuration is
sampled from the Gibbs distribution in an Aztec diamond of size
$n\times n$, at time $T$ it is distributed according to the Gibbs
distribution in an Aztec diamond of size $(n+T)\times (n+T)$.

Preservation of Gibbs measures has a direct consequence on the
hydrodynamic PDE of these growth models. To explain this, let us
forget dynamics for a moment. According to the general principle of
statistical mechanics, on large scales (\emph{i.e.,} rescaling height
as $\epsilon h_{\epsilon^{-1}x}$ and letting $\epsilon\to0$), the
height function sampled from a Gibbs measure concentrates around a
non-random profile $\bar h$, the \emph{equilibrium shape}. This
profile is determined by minimizing a surface tension functional
$\int \sigma(\nabla \bar h)dx$, with suitable boundary
conditions. More explicitly, the function $\bar h(x)$ satisfies the
Euler-Lagrange equation associated to such functional; due to
convexity (see, \emph{e.g.,} \cite[Ch. 4]{Sheffield}) of the surface tension $\sigma(\cdot)$, this is a
second-order PDE of elliptic type. Back to the growth models: the fact
that the above AKPZ evolutions preserve Gibbs measures translates into
the fact that their hydrodynamic PDE preserves the Euler-Lagrange
equation. Namely, if the initial profile satisfies the Euler-Lagrange
equation, then so does it at later times $t$.

The main contribution of the present work, Theorem \ref{thm1}, is an
elementary argument that shows that, roughly speaking, if the
hydrodynamic PDE of a two-dimensional growth model preserves the Euler-Lagrange equation of
\emph{some} equilibrium interface model, then the speed of growth
function $v$ satisfies $\det (D^2 v(\rho))\le0$.  In the particular
case where the growth model is defined in terms of lozenge tilings, or,
more generally, in terms of a planar dimer model we show a stronger
result (Theorems \ref{thm4} and \ref{thm5}): the hydrodynamic PDE
preserves the Euler-Lagrange equation of the dimer model if and
only if the speed $v$ is a harmonic function with respect to a certain
natural complex-valued variable defined in terms of the interface
slope \cite{KO,Kcmp}.

We conclude this introduction with a couple of observations.  First,
we emphasize that Theorem \ref{thm1} implies that the hydrodynamic PDE
of Isotropic KPZ models, like the cube-stacking dynamics, cannot
preserve the Euler-Lagrange equation of an equilibrium interface model
with convex surface tension. This suggests the possibility that the
stationary states of such growth processes are \emph{not} of Gibbs
type.  The occurrence of non-Gibbs stationary states in irreversible
interacting particle systems is believed to be a generic feature
\cite{LS} but it has been proven only in very few examples
(cf. Section 4.5.4 of \cite{nonGibbs} and references therein).
Secondly, our argument might be instrumental in proving
$\det(D^2(\rho))\le 0$ for some AKPZ models for which the explicit
computation of $v$ is not possible. We think, in particular, of the
above-mentioned $q$-Whittaker process, whose invariant (Gibbs)
measures are not of determinantal type and do not allow for explicit
computations. A last remark is that, while in equilibrium statistical
mechanics it is usually possible to guess the universality class a model belongs to 
from the symmetries of the Hamiltonian, for growth models it is not
clear how to guess the convexity properties of $v(\cdot)$ from the
definition of the generator.

\section{Euler-Lagrange equation and AKPZ signature}

From this point on, we denote macroscopic profiles
(time-dependent as well as time-independent) by $h$ instead of
$\bar h$, since the microscopic height $h_x$ will not appear in any of
the subsequent arguments.

Let $\sigma:\mathbb R^2\mapsto \mathbb R$ be the surface tension
function of an equilibrium two-dimensional interface model, with
$\sigma(\rho)$ denoting the surface tension at slope $\rho$. We will
denote by $\sigma_{i,j}$ the derivative of $\sigma$ w.r.t. components
$i,j$ of $\rho$, and by $\Sigma(\rho)$ the $2\times 2$ matrix with
\begin{eqnarray}
  \label{eq:Sigma}
\Sigma(\rho)_{i,j}:=\sigma_{i,j}(\rho).  
\end{eqnarray}
Since $\sigma$ is convex, $\Sigma$ is positive definite (possibly not
strictly). In many natural examples (notably the dimer model discussed
in Section \ref{sec:dimer}), the interface configurations are
Lipschitz, and the allowed slopes belong to some convex set $N$ that
in the dimer model setting is called ``Newton Polygon''. In this case,
$\sigma$ equals $+\infty$ outside $N$ and $0$ on the boundary of $N$.

An ``equilibrium shape'' is a 
 height profile  $ h :\mathbb R^2\mapsto \mathbb R$ that locally minimizes the surface tension 
functional
\begin{eqnarray}
  \label{eq:stf}
  F ( \varphi )=\int_{\mathbb R^2} \sigma(\nabla  \varphi )dx.
\end{eqnarray}
(One can also consider the optimization problem restricted to a
sub-domain $U$ of the plane, in which case the boundary height
$h|_{\partial U}$ on the boundary of $U$ is fixed.)  More precisely,
for every finite subset $V\subset \mathbb R^2$ whose boundary is a
smooth simple curve, the minimum of the functional
\begin{eqnarray}
  \int_V \sigma(\nabla \varphi)dx
\end{eqnarray}
over functions on $V$ with boundary height
$\varphi|_{\partial V}\equiv h |_{\partial V}$ 
is realized by  the restriction of $h$ itself to $V$.  At every point
 where $h$ is $C^2$, it satisfies the Euler-Lagrange equation
\begin{eqnarray}
  \label{eq:EL}
\mathcal L[ h ](x):=  \sum_{i,j=1}^2\sigma_{i,j}(\nabla  h (x))\partial^2_{x_i x_j} h(x) =0,
\end{eqnarray}
that is an elliptic PDE.
While  for $d=1$  solutions of the Euler-Lagrange equation are affine, this is not necessarily the case in dimension $d=2$ (and higher).

Now consider a two-dimensional growth model and assume it satisfies a hydrodynamic limit with speed function $v$. Namely, its  random height profile, once rescaled as explained at the end of Section \ref{sec:mathAKPZ}, converges to the  solution of the  deterministic Hamilton-Jacobi equation 
\begin{eqnarray}
  \label{eq:evoluz}
  \left\{
  \begin{array}[l]{l}
\partial_t  h(x,t) =v(\nabla  h(x,t) )\\ h(x,0)=h_0(x)    
  \end{array}
  \right.
\end{eqnarray}
where $v(\cdot)$ is some function defined on the set $N$ of allowed
slopes.
Non-linear Hamilton-Jacobi equations like \eqref{eq:evoluz} are
well-known to develop singularities (discontinuities of $\nabla h$) in
finite time, and there is not a unique way of defining a weak solution
after the time of appearance of singularities. However, the physically
relevant solution defined for all times is the so-called viscosity
solution \cite{Viscosity}, that can be obtained by adding
$\nu\Delta h(x,t)$ to $v(\nabla h(x,t))$ and then sending
$\nu\to 0^+$.  We let $ h (t)$ denote the viscosity solution of
\eqref{eq:evoluz} at time $t$. We assume henceforth that
$v(\cdot):N \mapsto \mathbb R$ is smooth enough so that the viscosity
solution exists and is unique.

Our first result says, roughly speaking, that if the PDE
\eqref{eq:evoluz} preserves the Euler-Lagrange equation \eqref{eq:EL}, then
the determinant of the Hessian of $v$ cannot be positive. More
precisely:
\begin{thm}
  \label{thm1}
  Let $\rho$ be a slope in the interior of $N$ where both $v(\cdot)$
  and $\sigma(\cdot)$ are $C^2$ differentiable and where $\Sigma(\rho)$ in \eqref{eq:Sigma} is
  strictly positive definite. Assume that there exists an equilibrium
  shape $h$ and a point $x\in \mathbb R^2$ where $h(\cdot)$ is $C^2$
  differentiable, and such that:
\begin{itemize}
  
  \item $\nabla h (x)=\rho$ and the Hessian of $ h $ at $x$ is not zero;

  \item one has
    \begin{eqnarray}
      \label{eq:dL0}
\mathcal L[h(t)](y)=0
    \end{eqnarray}
    for $y$ in a neighborhood of $x$ and $t$ sufficiently small.
  \end{itemize}
  Then, $\det(D^2 v(\rho))\le 0$.
\end{thm}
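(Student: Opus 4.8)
The plan is to exploit the assumption that the Euler--Lagrange operator $\mathcal L$ stays zero along the growth flow to extract a pointwise identity relating $D^2v(\rho)$ to $\Sigma(\rho)$, and then to read off the sign of $\det(D^2v)$ from that identity. Concretely, I would differentiate the relation $\mathcal L[h(t)](y)=0$ in $t$ at $t=0$, using that $h(t)$ solves the Hamilton--Jacobi equation $\partial_t h=v(\nabla h)$. Since $\mathcal L[h(0)]\equiv 0$ in a neighborhood of $x$ by hypothesis, the first-order-in-$t$ term must also vanish; this gives a linear second-order equation for the ``perturbation'' $\partial_t h|_{t=0}=v(\nabla h)$ that holds in a neighborhood of $x$. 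The idea is that $\mathcal L$ is the linearization of the surface-tension functional, so preservation of $\mathcal L=0$ forces a compatibility condition at the point $x$ where we have the freedom of a nonzero Hessian.

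\medskip
Carrying this out, write $u(y):=v(\nabla h(y))$, so that $\partial_t\nabla h=\nabla u$ at $t=0$. Differentiating \eqref{eq:EL} in $t$ produces two groups of terms: one where the $t$-derivative hits the coefficient $\sigma_{i,j}(\nabla h)$, giving $\sum_{i,j,k}\sigma_{i,j,k}(\nabla h)\,(\partial_{x_k}u)\,\partial^2_{x_ix_j}h$, and one where it hits the Hessian $\partial^2_{x_ix_j}h$, giving $\sum_{i,j}\sigma_{i,j}(\nabla h)\,\partial^2_{x_ix_j}u$. Setting the sum to zero at $y=x$ and expanding $\partial^2_{x_ix_j}u$ via the chain rule (recalling $u=v\circ\nabla h$) yields, after collecting terms, a contraction of the Hessian $M:=D^2h(x)$ against a tensor built from $D^2v(\rho)$ and $\Sigma(\rho)$. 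The cleanest route is to observe that both $\mathcal L[h]=0$ and the linearized equation are of the form ``trace of (a fixed symmetric matrix) times $M$ equals zero'', and that $\mathcal L[h]=0$ already tells us $\mathrm{tr}(\Sigma M)=0$; combining the two trace conditions on the \emph{same} nonzero symmetric matrix $M$ is what constrains the relative geometry of $\Sigma$ and $D^2v$.

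\medskip
The decisive algebraic step is then linear algebra on $2\times 2$ symmetric matrices. The matrix $M=D^2h(x)$ is nonzero and satisfies $\mathrm{tr}(\Sigma M)=0$; since $\Sigma$ is strictly positive definite, the condition $\mathrm{tr}(\Sigma M)=0$ means $M$ lies in the two-dimensional subspace of symmetric matrices orthogonal (in the $\Sigma$-weighted inner product) to $\Sigma^{-1}$, and such an $M$ is necessarily indefinite, i.e.\ $\det M<0$, unless $M=0$. The linearized identity should then express a second scalar condition forcing $D^2v(\rho)$ to be $\Sigma$-proportional to $M$ (or more precisely to annihilate $M$ in a paired sense), from which $\det(D^2v(\rho))\le 0$ follows because a symmetric $2\times 2$ matrix sharing the indefinite ``null direction'' structure dictated by an indefinite $M$ cannot have positive determinant. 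I would make this last implication precise by diagonalizing in a basis adapted to $\Sigma$ (i.e.\ performing a linear change of variables making $\Sigma=\mathrm{Id}$), so that $M$ becomes a trace-zero symmetric matrix and the compatibility relation becomes a clean statement about $D^2v$ in that gauge.

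\medskip
\emph{The main obstacle} I anticipate is bookkeeping the third derivatives $\sigma_{i,j,k}$ that appear when differentiating the coefficients of $\mathcal L$ in $t$: a priori the linearized equation involves $\sigma_{i,j,k}(\rho)$, which are not controlled by the hypotheses and do not appear in the desired conclusion. The crux is to show these third-order terms either cancel or combine into something that does not affect the \emph{sign} of $\det(D^2v)$ — most plausibly because they contribute only through the already-vanishing combination $\mathrm{tr}(\Sigma M)=\mathcal L[h]=0$ or through terms proportional to $\mathcal L[h]$ and its first derivatives, which all vanish in the neighborhood of $x$ where \eqref{eq:dL0} holds. Establishing that the $\sigma_{i,j,k}$-terms drop out (using that $\mathcal L[h]\equiv 0$ \emph{in a neighborhood}, hence its spatial derivatives vanish too at $x$) is what reduces the problem to the pure $\Sigma$-versus-$D^2v$ linear algebra, and verifying this cancellation carefully is the technical heart of the argument.
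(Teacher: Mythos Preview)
Your high-level plan---differentiate $\mathcal L[h(t)]=0$ in $t$ at $t=0$ and read off a pointwise algebraic constraint relating $D^2v(\rho)$ and $\Sigma(\rho)$---is exactly the paper's strategy. Two points, one minor and one a real gap.

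\medskip
\emph{Minor.} The paper differentiates $\mathcal L[h(t)]$ along the characteristic $x(t)$ with $\dot x(t)=-Dv(\nabla h)$, not at a fixed spatial point. Since $\nabla h(x(t),t)$ is constant along characteristics, the coefficient $\sigma_{i,j}(\nabla h)$ is frozen in $t$ and your ``main obstacle'', the $\sigma_{i,j,k}$ terms, simply never appears. Your fixed-point route can in fact be pushed through---the cancellation you hope for is real: multiply $\partial_{x_k}\mathcal L[h]=0$ by $D_kv(\rho)$ and sum over $k$; this kills both the $\sigma_{i,j,k}$ contributions and the third derivatives of $h$---but it costs an extra derivative of $\sigma$ beyond the stated $C^2$ hypothesis, whereas the characteristics trick does not.

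\medskip
\emph{Gap.} The identity you arrive at is \emph{not} of the form ``$\mathrm{tr}(\text{fixed matrix}\cdot M)=0$'' with $M=D^2h(x)$; it is \emph{quadratic} in $M$:
\[
0=\sum_{\ell,k}D^2_{\ell,k}v(\rho)\,A_{\ell,k},\qquad A_{\ell,k}=\sum_{i,j}\sigma_{i,j}(\rho)\,M_{i\ell}M_{jk},
\]
that is, $\mathrm{tr}\bigl((D^2v)\,M\Sigma M\bigr)=0$. Your subsequent sketch---combine two linear-in-$M$ trace conditions, use that $M$ is indefinite, and infer that $D^2v$ ``shares the indefinite null direction structure''---is therefore based on a wrong form of the identity and does not lead anywhere even on its own terms (two scalar orthogonality conditions on a single $M$ do not pin down the signature of $D^2v$). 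The correct endgame, and this is where the content actually lies, is that $A=M\Sigma M$ is positive semidefinite and nonzero because $\Sigma>0$ and $M\ne 0$; hence $\mathrm{tr}\bigl((D^2v)\,A\bigr)=0$ is incompatible with $D^2v$ being definite, giving $\det(D^2v)\le 0$. The paper phrases this last step via $\sqrt{D^2v}$ and a Cauchy--Schwarz inequality, but the key input is the positive semidefiniteness of $M\Sigma M$, not the indefiniteness of $M$; the Euler--Lagrange relation $\mathrm{tr}(\Sigma M)=0$ is in fact never used.
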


\begin{proof}
  For sufficiently small times the PDE \eqref{eq:evoluz} can be solved
  by the method of characteristics, and the solution is $C^2$ in the
  neighborhood of $x$.  Take $x=x(t)$ that runs along the
  characteristic line started at $x$, \emph{i.e.,}
\begin{eqnarray}
  \label{eq:char}
  \frac{d}{dt} x(t)=-D v(\nabla h(x(t),t) ), \qquad x(0)=x,
\end{eqnarray}
with $Dv$ the gradient of $v$. Recall that $\nabla  h $ is constant along the characteristic lines, so that $d x(t)/dt=-D v(\nabla h(x))$ is constant.
Call
\begin{eqnarray}
  \label{eq:R}
  R(x):=\left.\frac{d}{dt}\mathcal L[ h (t)](x(t))\right|_{t=0},
\end{eqnarray}
that equals zero by the assumption of the theorem.

Since $\nabla h $  is constant along characteristics, Eq. \eqref{eq:R} gives
\begin{eqnarray}
  \label{eq:EL3}
  R(x)=\sum_{i,j=1}^2\sigma_{i,j}(\nabla  h (x))\left.\frac{d}{dt}\partial^2_{x_i x_j} h (x(t),t)\right|_{t=0}.
\end{eqnarray}
Using the chain rule for derivatives together with the definition \eqref{eq:char} of characteristic lines
we get
\begin{multline}
  \left.  \frac{d}{dt}\partial^2_{x_i x_j} h (x(t),t)\right|_{t=0}
= \left. \sum_{\ell,k=1}^2 D^2_{\ell, k} v(\nabla  h (x))\partial^2_{x_j x_\ell} h (x(t),t)\partial^2_{x_i x_k} h (x(t),t)\right|_{t=0},
\end{multline}
where \[D^2_{\ell,k}v(\rho):= \frac{\partial^2}{\partial \rho_\ell\partial\rho_k}v(\rho).\]
Altogether, we have obtained
\begin{eqnarray}
  \label{eq:obt}
0=R(x)=  \sum_{\ell,k=1}^2[D^2_{\ell, k} v(\nabla  h (x))] A_{\ell,k}(x)
\end{eqnarray}
with
\begin{eqnarray}
  \label{eq:A}
  A_{\ell,k}(x)=\sum_{i,j=1}^2\sigma_{i,j}(\nabla h (x))\;\partial^2_{x_i x_\ell} h (x) \;\partial^2_{x_j x_k} h (x) .
\end{eqnarray}

For lightness of notation, call $M:=D^2 v(\rho)$ the Hessian matrix of
$v$ computed at $\rho$ and write $\sigma_{i,j}$ instead of
$\sigma_{i,j}(\nabla h(x))$.  Assume by contradiction that $\det(M)>0$
and that $M$ is positive definite (the argument works analogously if
$\det(M)>0$ and $M$ is negative definite), so that we can write it as
$M=\sqrt M \sqrt M$.  Also call $u_k, k=1,2,$ the vector
$(\partial^2_{x_1 x_k} h(x) ,\partial^2_{x_2 x_k} h(x) )$.  We rewrite
\eqref{eq:obt} as
\begin{equation}
  \nonumber
0=  \sigma_{1,1}\|\sqrt M u_1\|^2 +  \sigma_{2,2}\|\sqrt M u_2\|^2+2\sigma_{1,2}(\sqrt M u_1,\sqrt M u_2).
\end{equation}
Since the matrix $\Sigma$ is strictly positive
definite, \[|\sigma_{1,2}|<\sqrt{\sigma_{1,1}\sigma_{2,2}}\quad \textrm{and} \quad 
\sigma_{1,1},\sigma_{2,2}>0.\]  By the assumption that the Hessian of
$h$ at $x$ is non-zero, either $u_1$ or $u_2$ is non-zero. If one of
them, say $u_2$, is zero, then one gets
\begin{eqnarray}
  \sigma_{1,1}\|\sqrt M u_1\|^2=0
\end{eqnarray}
so that $M$ is not strictly positive definite, which is a contradiction. 
Also, if either $\sqrt Mu_1$ or $\sqrt M u_2$ is zero, then again $\det(M)=0$ which contradicts the hypothesis $\det(M)>0$.
From now on we can therefore assume that $\sqrt M u_1,\sqrt M u_2\ne0$.
 Also, $\sqrt M u_1$ cannot be orthogonal to $\sqrt M u_2$; otherwise
\[
0=\sigma_{1,1}\|\sqrt M u_1\|^2+\sigma_{2,2}\|\sqrt M u_2\|^2,
  \]
  which is not possible since $\sigma_{1,1}>0,\sigma_{2,2}>0$. In the
  remaining case where $(\sqrt M u_1,\sqrt M u_2)\ne 0$,
\begin{multline}
  0\ge  \sigma_{1,1}\|\sqrt M u_1\|^2+  \sigma_{2,2}\|\sqrt M u_2\|^2-2|\sigma_{1,2}|\, |(\sqrt M u_1,\sqrt M u_2)|\\
  > \sigma_{1,1}\|\sqrt M u_1\|^2+  \sigma_{2,2}\|\sqrt M u_2\|^2\\-2\sqrt{\sigma_{1,1}\|\sqrt M u_1\|^2}
  \sqrt{\sigma_{2,2}\|\sqrt M u_2\|^2}\ge0 
\end{multline}
that is a contradiction. 
Altogether, it is not possible that $\det(M)>0$.
\end{proof}

\section{Dimer model: Burgers equation and harmonicity of $v$}
\label{sec:dimer}
In this section, we assume that $\sigma$ is the surface tension of a
dimer model on a bipartite, planar, periodic graph.  We refer,
\emph{e.g.,} to \cite{Kenyonnotes} for an introduction to dimer models
and to \cite{KOS} for the definition of their ergodic Gibbs measures;
we recall here a minimum of basic facts.  Given a planar, bipartite,
periodic weighted graph $G=(V,E)$ (positive weights are associated to
edges), dimer configurations are perfect matchings of $G$, and a
canonical construction allows to associate to a dimer configuration an
integer-valued height function on faces of $G$ (the map is bijective
up to a global additive constant for the height). The possible slopes
of the height function belong to a convex polygon $N$, called ``Newton
polygon'', whose vertices have integer coordinates. For every slope
$\rho$ in the interior of $N$ there exists a unique translation
invariant, ergodic Gibbs measure $\pi_\rho$ on dimer coverings, with
average slope $\rho$.  Conditionally on the dimer configuration
outside any finite sub-graph $\tilde G$, $\pi_\rho$ assigns to any
admissible dimer configuration in $\tilde G$ a probability
proportional to the product of weights of edges occupied by
dimers. Slopes $\rho$ that are in the interior of $N$ and whose
coordinates are not both integer are called ``liquid slopes''. In this
case, $\pi_\rho$ is called a ``liquid phase'' and it is known that
dimer correlations decay polynomially and the height field behaves
like a log-correlated massless Gaussian field on large scales. If
$\rho$ is in the interior of $N$ but has integer coordinates, then for
generic choice of the edge weights the measure $\pi_\rho$ is a
``gaseous'' or ``smooth'' phase, with exponentially decaying
correlations and $O(1)$ height fluctuations\footnote{For special edge
  weights, the measure $\pi_\rho$ can be liquid (with polynomially
  decaying correlations) even for integer-valued slopes \cite{KOS}.
}. A crucial object for the dimer model is the
so-called characteristic polynomial $P(z,w)$: this is a Laurent
polynomial in $z,w\in \mathbb C$, that is  associated to the
weighted graph $G$. For instance, the surface tension $\sigma(\cdot)$  is 
obtained as the Legendre transform with respect of $B=(B_1,B_2)\in\mathbb R^2$ of
\begin{eqnarray}
  \label{eq}
  \frac1{(2\pi i)^2}\int_{|z|=e^{B_1}}\frac{dz}z \int_{|w|=e^{B_2}}\frac{dw}w\log P(z,w).
\end{eqnarray}

For the dimer model, it is known that the Euler-Lagrange equations
\eqref{eq:EL} for equilibrium shapes $h$ can be expressed via a
\emph{first-order} PDE, that was called \emph{complex Burgers
  equation} in \cite{KO}.  Namely, for $x=(x_1,x_2)$, define non-zero
complex numbers $z=z(x),w=w(x)$ via the relations\footnote{Our $z,w$ correspond to the complex conjugates
  of the similarly denoted complex numbers in \cite{KO}.}
\begin{eqnarray}
  P(z,w)=0,\qquad
  \label{eq:nablazw}
  \nabla  h(x) =\frac1\pi(-\arg w,\arg z).
\end{eqnarray}
According to \cite[ Theorem 1]{KO}, at every point $x$ in the liquid
region (\emph{i.e.,} such that  in the neighborhood of $x$, $h$ is $C^1$ with $\nabla h$ a liquid slope), the
Euler-Lagrange equation \eqref{eq:EL} is equivalent to the ``complex
Burgers equation''
\begin{eqnarray}
  \label{eq:Burgzw}
  \frac{z_{x_1}}z+\frac{w_{x_2}}w=0,
\end{eqnarray}
where $f_{x_i}$ denotes the derivative of $f$ w.r.t. $x_i$.
Note that relation \eqref{eq:nablazw} does not really fix $z,w$
because the argument is a multi-valued function; as
explained in \cite{KO}, the statement is that there exists a branch of
the argument for which equality \eqref{eq:nablazw} is satisfied.

Take a point $x_0$ in the liquid
region, where the gradient of $h$ is
\begin{eqnarray}
  \label{eq:rho0z}
\rho=\nabla h (x_0)=\frac1\pi(-\arg w_0,\arg z_0)  
\end{eqnarray}
for some
$z_0,w_0$ such that $P(z_0,w_0)=0$. Locally around $(z_0,w_0)$ we
can 
expand $P(z,w)$ as
\begin{eqnarray}
  P(z,w)=p_1 (z-z_0)+p_2(w-w_0)+O(\|(z,w)-(z_0,w_0)\|^2)
\end{eqnarray}
with $p_1=\partial_zP(z_0,w_0),p_2=\partial_wP(z_0,w_0)$.
As discussed in \cite[Sec. 2.3]{KO}, the ratio $p_1/p_2$ is neither zero nor infinite.
Then, by the implicit function theorem we can locally write $z$ as a differentiable function $z(w)$ or, conversely, write $w$ as a differentiable function $w(z)$.

An identity that will be important in the following is that
\begin{eqnarray}
  \label{eq:idKO}
  \frac{\partial\log |z|}{\partial \nabla_{x_2} h }=  \frac{\partial\log |w|}{\partial \nabla_{x_1} h },
\end{eqnarray}
that follows from \cite[Eq. (12)]{KO}.
We have also
\begin{multline}
  \label{eq:id2}
  \frac{\partial\log |z|}{\partial \nabla_{x_1} h }=   \frac{\partial\log z}{\partial \nabla_{x_1} h },\qquad  \frac{\partial\log |z|}{\partial \nabla_{x_2} h }=   \frac{\partial\log z}{\partial \nabla_{x_2} h }-i\pi\\
  \frac{\partial\log |w|}{\partial \nabla_{x_1} h }=   \frac{\partial\log w}{\partial \nabla_{x_1} h }+i\pi,\qquad
  \frac{\partial\log |w|}{\partial \nabla_{x_2} h }=   \frac{\partial\log w}{\partial \nabla_{x_2} h },  
\end{multline}
that follow from \eqref{eq:nablazw}.

Relations $w=w(z)$  and \eqref{eq:nablazw} give a bijection between
$z$ in a neighborhood of $z_0$ and the interface slope
$\nabla h =(\nabla_{x_1} h ,\nabla_{x_2} h )$ in a neighborhood of
$\nabla h (x_0)$. Therefore, we can write locally (\emph{i.e.,} for $\nabla h$
close to $\nabla h(x_0)$)  the speed $v(\nabla h )$ as some function
$f(z(\nabla h))$.
\begin{rem}
  Globally, the mapping from $z$ to the slope is multi-valued, since
  for a given $z$ there may be many $w$ satisfying $P(z,w)=0$.  
  Notable exceptions are the dimer model on the honeycomb lattice and the  square grid with
  translation-invariant weights.  We will briefly come back to these special cases in Section \ref{sec:lozenge}.
  See instead \cite{CT} for a growth model in a case where the solution of $P(z,w)=0$ has several branches.
\end{rem}

Let $ h = h (x,t)$ be the viscosity solution of the PDE
\eqref{eq:evoluz}. For  $t$ small and $x$ around $x_0$ the solution
is smooth, and we let $ z =z(x,t)$ be defined by \eqref{eq:nablazw}
with $h$ replaced by $h(t)$, with a choice of the branch of the
argument so that $z(x_0,0)=z_0$.  Define also
  \begin{eqnarray}
    \label{eq:Delta}
    \Delta= z  w _{x_2}+w z _{x_1}
  \end{eqnarray}
  so that $\Delta=0$ iff
  the complex Burgers equation \eqref{eq:Burgzw} holds.

\begin{thm}
  \label{thm4} Under the same assumptions as for Theorem \ref{thm1},
  assume in addition that $\sigma$ is the surface tension function   of a
  dimer model and $\rho$ is a slope of the liquid phase.
 Then, 
  $f( \cdot )$ is a harmonic function at $z_0$, \emph{i.e.,}
  \begin{eqnarray}
    \label{z0}
   \frac{\partial^2 f( z)}{\partial \Re z ^2}+\frac{\partial^2 f( z )}{\partial \Im z ^2}=0
  \end{eqnarray}
  for $z=z_0$, where $z_0$ is related to $\rho $ as in \eqref{eq:rho0z}.
\end{thm}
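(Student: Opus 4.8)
The plan is to pass to the conformal coordinate $\zeta=\log z$ and to exploit the fact that, in the Kenyon--Okounkov parametrization available because $\rho$ is a liquid slope, both components of the slope become \emph{harmonic} functions of $\zeta$. Since harmonicity is preserved under the holomorphic change of variables $z\mapsto\zeta=\log z$ (which has nonzero derivative at $z_0\neq0$), it suffices to show that $\tilde f(\zeta):=f(e^\zeta)=v(\rho(\zeta))$ is harmonic in $\zeta=a+ib$, where $a=\log|z|$ and $b=\arg z$. Writing $\eta:=\log w=g(\zeta)$ for the holomorphic function furnished by $w=w(z)$ and $P(z,w)=0$, and splitting $g=g_1+ig_2$, the relations \eqref{eq:nablazw} read $\rho_2=b/\pi$ and $\rho_1=-g_2(a,b)/\pi$. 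Both are harmonic in $(a,b)$ (one is linear, the other the imaginary part of a holomorphic function), so in the chain-rule expansion of $\Delta_\zeta\tilde f$ the first-order terms $\propto \nabla v\cdot\Delta_\zeta\rho_m$ drop out, and with $M:=D^2v(\rho)$, $p:=\partial_a g_2$, $q:=\partial_b g_2$ a short computation leaves
\begin{equation}
\nonumber
\pi^2\,\Delta_\zeta\tilde f=\sum_{m,n}M_{mn}\,\langle\nabla\rho_m,\nabla\rho_n\rangle=\operatorname{tr}(MC),\qquad C:=\begin{pmatrix} p^2+q^2 & -q\\ -q & 1\end{pmatrix}.
\end{equation}
Thus harmonicity of $f$ at $z_0$ is \emph{equivalent} to the single scalar identity $\operatorname{tr}(MC)=0$.

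Next I would identify the surface-tension Hessian $\Sigma$ with the same data $p,q$. Since $\sigma$ is the Legendre transform, in the magnetic variables $(B_1,B_2)=(\log|z|,\log|w|)$, of the free energy \eqref{eq}, Legendre duality gives $\nabla_\rho\sigma=(B_1,B_2)$ and hence $\Sigma_{ij}=\partial B_i/\partial\rho_j$. Inverting the parametrization (so $b=\pi\rho_2$, while $a=a(\rho)$ solves $g_2(a,\pi\rho_2)=-\pi\rho_1$) and using the Cauchy--Riemann equations for $g$ together with the identities \eqref{eq:idKO}--\eqref{eq:id2}, a direct computation gives
\begin{equation}
\nonumber
\Sigma=-\frac{\pi}{p}\begin{pmatrix} 1 & q\\ q & p^2+q^2\end{pmatrix},
\end{equation}
which is symmetric and, with the sign conventions of \cite{KO} (so that $p<0$), strictly positive definite as required. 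The Euler--Lagrange/complex-Burgers equation for the equilibrium shape $h$, namely $\mathcal L[h](x)=\operatorname{tr}(\Sigma B)=0$ with $B$ the (nonzero) Hessian of $h$ at $x$, then becomes the linear constraint
\begin{equation}
B_{11}+2q\,B_{12}+(p^2+q^2)B_{22}=0.\tag{$\star$}
\end{equation}

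Finally I would feed this into the relation coming from the argument of Theorem \ref{thm1}: under the preservation hypothesis \eqref{eq:dL0}, that proof shows $\operatorname{tr}(MA)=0$ with $A=B\Sigma B$ the matrix \eqref{eq:A}. The key algebraic fact, obtained by substituting $(\star)$ into the entries of $A$, is that $A$ is \emph{proportional} to $C$:
\begin{equation}
\nonumber
A=A_{22}\,C,\qquad A_{22}=-\frac{\pi}{p}\bigl[(B_{12}+qB_{22})^2+p^2B_{22}^2\bigr]>0,
\end{equation}
where strict positivity uses $p<0$ together with $B\neq0$ (if the bracket vanished then $B_{12}=B_{22}=0$, and $(\star)$ would force $B_{11}=0$). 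Hence $\operatorname{tr}(MA)=A_{22}\operatorname{tr}(MC)$, and $\operatorname{tr}(MA)=0$ forces $\operatorname{tr}(MC)=0$, i.e.\ $\Delta_\zeta\tilde f=0$ and harmonicity of $f$ at $z_0$. I expect the main obstacle to be the second step: pinning down the explicit form of $\Sigma$ in terms of $p,q$ via Legendre duality and the Kenyon--Okounkov identities, and then verifying the proportionality $A\propto C$, since the cancellations producing $A_{11}=(p^2+q^2)A_{22}$ and $A_{12}=-qA_{22}$ from $(\star)$ are the crux. It is precisely this rigidity, special to the dimer surface tension, that upgrades the inequality $\det D^2v\le0$ of Theorem \ref{thm1} to the harmonicity equality.
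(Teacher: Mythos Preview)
Your argument is correct and takes a genuinely different route from the paper's. The paper proves Theorems~\ref{thm4} and~\ref{thm5} together by computing the time derivative of the Burgers defect $\Delta=zw_{x_2}+wz_{x_1}$ along the flow \eqref{eq:evoluz}; after a page of manipulations using \eqref{eq:idKO}--\eqref{eq:id2} it obtains the identity \eqref{eq:dDt2}, whose only term surviving when $\Delta=\partial_t\Delta=0$ is a nonvanishing Jacobian factor times the Laplacian of $f$, and concludes. Your approach instead stays at the level of the already-derived algebraic identity \eqref{eq:obt} from the proof of Theorem~\ref{thm1}, namely $\operatorname{tr}(MA)=0$ with $A=B\Sigma B$, and exploits the special dimer structure of $\Sigma$: passing to $\zeta=\log z$ you identify the Laplacian of $f$ with $\operatorname{tr}(MC)$ for an explicit matrix $C$, compute $\Sigma$ via Legendre duality in the Kenyon--Okounkov variables, and then verify the key algebraic proportionality $A=A_{22}C$ with $A_{22}>0$ under the Euler--Lagrange constraint $(\star)$.

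What each buys: the paper's evolution-of-$\Delta$ computation is tailored to give the converse (Theorem~\ref{thm5}) in the same breath, since \eqref{eq:dDt2} immediately shows that harmonicity of $f$ makes the $\Delta$-equation Lipschitz in $\Delta$ and hence preserves $\Delta=0$. Your argument is more elementary---no time derivatives of $z,w$, no tracking of the error terms $a,b$---and makes transparent exactly where the dimer surface tension is special: it is the rank-one-type rigidity $A\propto C$ on the Euler--Lagrange subspace that upgrades the inequality $\det D^2v\le0$ to the equality $\operatorname{tr}(MC)=0$. The one place where you lean on conventions (the sign $p<0$) is harmless, since strict positive definiteness of $\Sigma$ is among the hypotheses and forces it.
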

Conversely, we have:
\begin{thm}\label{thm5}
  Let $h$ be an equilibrium shape and $h(t)$ be the solution of
  \eqref{eq:evoluz} with initial condition $h$ and $v(\cdot)=f(z(\cdot))$. Assume that $\Delta$
  defined in \eqref{eq:Delta} is differentiable in time and space for
  $x\in A,t\in[0,T]$, for some neighborhood $A$ of $x_0$ and some
  $T>0$.  If $f(z)$ is a harmonic function of $z$, then
  $ h (t)$ satisfies the complex Burgers equation for
  $(x,t)\in A\times [0,T]$.
\end{thm}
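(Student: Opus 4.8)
The plan is to follow the quantity $\Delta$ along the characteristics of the Hamilton--Jacobi equation \eqref{eq:evoluz} and to show that, when $f$ is harmonic, $\Delta$ solves a linear homogeneous ODE along each characteristic; since $h$ is an equilibrium shape we have $\Delta\equiv0$ at $t=0$, and the ODE then forces $\Delta\equiv0$ for all $t\in[0,T]$. First I would record how the Hessian of $h$ evolves. Writing $Q(t)$ for the symmetric matrix with entries $Q_{ij}=\partial^2_{x_ix_j}h(x(t),t)$ along the characteristic \eqref{eq:char}, the chain-rule computation already performed in the proof of Theorem \ref{thm1} (where the first-order terms cancel along characteristics) holds at every time and reads
\[
\frac{d}{dt}Q(t)=Q(t)\,M\,Q(t),\qquad M:=D^2v(\rho),
\]
a matrix Riccati equation in which $M$ and $\Sigma:=\Sigma(\rho)$ are constant along the characteristic because $\nabla h$ is. Next, since $\Delta=z\,w_{x_2}+w\,z_{x_1}=zw\,(\partial_{x_1}\log z+\partial_{x_2}\log w)$ and the imaginary part of $\partial_{x_1}\log z+\partial_{x_2}\log w$ vanishes identically (equality of the mixed second derivatives of $h$, via $\nabla h=\tfrac1\pi(-\arg w,\arg z)$), Legendre duality between $\sigma$ and the Ronkin function \eqref{eq} --- whose integrability is exactly \eqref{eq:idKO} --- gives $\nabla\sigma=(\log|z|,\log|w|)$ and hence $\Delta=zw\,\mathcal L[h]$ with $\mathcal L[h]=\mathrm{tr}(\Sigma Q)$. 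As $zw\neq0$ it suffices to show that $\mathcal L[h(t)]$ stays $0$, and along the characteristic the Riccati equation gives $\tfrac{d}{dt}\mathrm{tr}(\Sigma Q)=\mathrm{tr}(\Sigma QMQ)$.

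The algebraic heart of the argument is the claim that the quadratic form $Q\mapsto\mathrm{tr}(\Sigma QMQ)$ vanishes on the hyperplane $\{\mathrm{tr}(\Sigma Q)=0\}$ if and only if $\mathrm{tr}(M\Sigma^{-1})=0$. To prove it I would factor $\Sigma=LL^{T}$ and set $\tilde Q=L^{T}QL$, $\tilde M=L^{-1}ML^{-T}$, so that $\mathrm{tr}(\Sigma Q)=\mathrm{tr}(\tilde Q)$ and $\mathrm{tr}(\Sigma QMQ)=\mathrm{tr}(\tilde M\tilde Q^{2})$; since a traceless symmetric $2\times2$ matrix satisfies $\tilde Q^{2}=\tfrac12\mathrm{tr}(\tilde Q^{2})\,I$, one gets $\mathrm{tr}(\tilde M\tilde Q^{2})=\tfrac12\mathrm{tr}(\tilde Q^{2})\,\mathrm{tr}(\tilde M)$, which vanishes for all such $\tilde Q$ exactly when $\mathrm{tr}(\tilde M)=\mathrm{tr}(M\Sigma^{-1})=0$. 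A quadratic form vanishing on the hyperplane $\{\mathrm{tr}(\Sigma Q)=0\}$ is divisible by that linear form, so in this case $\mathrm{tr}(\Sigma QMQ)=\mathrm{tr}(\Sigma Q)\,m(Q)$ for some linear functional $m$.

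It then remains to identify $\mathrm{tr}(M\Sigma^{-1})$ with a positive multiple of the Laplacian of $f$. I would pass to the variable $\zeta=\log z$, which is harmless because the Laplacian is conformally invariant, so $f$ is harmonic in $z$ iff $F(\zeta):=f(e^{\zeta})$ is harmonic in $\zeta=\xi+i\eta$. By the relations above, $\xi=\log|z|=\sigma_1(\rho)$ and $\eta=\arg z=\pi\rho_2$, so $v(\rho)=F(\xi(\rho),\eta(\rho))$ with Jacobian $J=\partial(\xi,\eta)/\partial\rho$ (rows $(\sigma_{11},\sigma_{12})$ and $(0,\pi)$), whence $M=J^{T}(D^2F)J+F_\xi\,D^2_\rho\sigma_1$, the $\eta$-Hessian term dropping out because $\eta$ is linear in $\rho$. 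Two consequences of the planar-dimer geometry make this collapse: (i) $\det\Sigma=\det D^2\sigma\equiv\pi^2$ on the liquid region, which follows from the holomorphic dependence $w=w(z)$ on the spectral curve and gives the clean identity $J\Sigma^{-1}J^{T}=\sigma_{11}I$, so that $\mathrm{tr}\big((D^2F)J\Sigma^{-1}J^{T}\big)=\sigma_{11}(F_{\xi\xi}+F_{\eta\eta})$; and (ii) differentiating the constraint $\det\Sigma\equiv\pi^2$ in $\rho_1$ gives $\mathrm{tr}\big((D^2_\rho\sigma_1)\Sigma^{-1}\big)=0$, killing the lower-order term. Hence $\mathrm{tr}(M\Sigma^{-1})=\sigma_{11}(F_{\xi\xi}+F_{\eta\eta})$, which vanishes precisely when $f$ is harmonic, since $\sigma_{11}>0$.

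Putting the pieces together, harmonicity of $f$ yields $\mathrm{tr}(M\Sigma^{-1})=0$, so by the divisibility above the evolution of $\mathcal L[h(t)]$ along each characteristic becomes the linear homogeneous ODE $\tfrac{d}{dt}\mathcal L[h(t)](x(t))=m(Q(t))\,\mathcal L[h(t)](x(t))$ with continuous coefficient $m(Q(t))$ on $[0,T]$. Since $\mathcal L[h]\equiv0$ at $t=0$ (equilibrium shape), uniqueness for this linear ODE (Gr\"onwall) forces $\mathcal L[h(t)](x(t))\equiv0$; letting the base point range over $A$ the characteristics foliate $A\times[0,T]$, and $\Delta=zw\,\mathcal L[h(t)]$ with $zw\neq0$ gives $\Delta\equiv0$, i.e.\ the complex Burgers equation \eqref{eq:Burgzw} holds on $A\times[0,T]$. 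The step I expect to be the real obstacle is item (i)--(ii): proving $\det D^2\sigma\equiv\pi^2$ and deducing the exact identity $\mathrm{tr}(M\Sigma^{-1})=\sigma_{11}(F_{\xi\xi}+F_{\eta\eta})$, where the dimer-specific structure (holomorphicity of $w(z)$ and the Legendre/Ronkin relation) genuinely enters; the matrix reduction and the Gr\"onwall conclusion are then routine.
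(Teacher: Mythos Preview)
Your proposal is correct and follows a genuinely different route from the paper's proof.

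The paper works entirely in the complex variables $z,w$: it computes $\partial_t\log z$ and $\partial_t\log w$ from \eqref{eq:evoluz} using the identities \eqref{eq:idKO}--\eqref{eq:id2}, and after several manipulations arrives at an evolution equation \eqref{eq:dDt2} for $\Delta$ itself, of the schematic form
\[
\partial_t\Delta=(\text{terms linear in }\Delta,\nabla_x\Delta)\;-\;\pi zw\,(\Im z_{x_2}\Re z_{x_1}-\Re z_{x_2}\Im z_{x_1})\big(f_{\Re z\,\Re z}+f_{\Im z\,\Im z}\big).
\]
When $f$ is harmonic the inhomogeneous term drops, and a PDE uniqueness (``Lipschitz right-hand side'') argument gives $\Delta\equiv0$.

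Your approach is more ``real-variable'': you pull everything back to the Hessian $Q$ of $h$, use the matrix Riccati equation $\dot Q=QMQ$ along characteristics, and exploit the clean $2\times2$ fact that $\operatorname{tr}(\Sigma QMQ)$ is divisible by $\operatorname{tr}(\Sigma Q)$ precisely when $\operatorname{tr}(M\Sigma^{-1})=0$. The dimer-specific content is then concentrated in the single scalar identity $\det D^2\sigma\equiv\pi^2$ (indeed a consequence of holomorphicity of $w(z)$ on the spectral curve, as you say) and the Legendre relation $\nabla\sigma=(\log|z|,\log|w|)$, from which $\operatorname{tr}(M\Sigma^{-1})=\sigma_{11}\Delta_\zeta F$ follows. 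This buys you an honest \emph{ODE} along each characteristic rather than a PDE, so the Gr\"onwall conclusion is immediate and arguably cleaner than the paper's ``Lipschitz initial-value problem'' sentence. Two small points to tidy up: (a) the precise signs/ordering in $\nabla\sigma=(\log|z|,\log|w|)$ depend on the conventions in \eqref{eq:nablazw} and should be checked once; (b) ``the characteristics foliate $A\times[0,T]$'' needs the remark that characteristics are straight lines (constant velocity $-Dv(\rho)$), so for $T$ small enough the characteristics through $A\times\{t\}$ originate in a slightly larger neighborhood at $t=0$ where $h$ is still an equilibrium shape.
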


The restriction to small times is simply due to the fact that at large
times the solution $h(t)$ might not be pointwise differentiable in
space and time, in which  case $\Delta$ is not well defined.
\begin{rem}
  Together with Theorem \ref{thm1}, Theorem \ref{thm4} implies the
  following: if $v(\nabla h)=f(z(\nabla h))$ with $f$ harmonic at every $z$
  corresponding to a liquid slope, then $\det(D^2v)\le 0$.  It should
  be possible to obtain this also directly by calculus, but we find
  that the path going through Theorem \ref{thm1} is more illuminating.
\end{rem}

  \begin{proof}[Proof of Theorems \ref{thm4} and \ref{thm5}]
    We start by writing
    \begin{eqnarray}
      \label{eq:dlogz}
      \partial_t\log z=\partial_t\log |z|+i\partial_t\arg(z)=\partial_t\log |z|+i\pi\partial_{x_2} f,
    \end{eqnarray}
    where we used $\arg(z)=\nabla_{x_2}h$ and
    $\partial_th=f(z(\nabla h))$.  Also,
    \begin{multline}
      \label{eq:daz}
      \partial_t\log |z|=\frac{\partial\log |z|}{\partial
        \nabla_{x_1} h }\partial_{x_1} f+\frac{\partial\log
        |z|}{\partial \nabla_{x_2} h }\partial_{x_2} f\\=
      \frac{\partial\log |z|}{\partial
        \nabla_{x_1} h }\partial_{x_1} f+\frac{\partial\log
        |w|}{\partial \nabla_{x_1} h }\partial_{x_2} f
      \\
=\frac{\partial f}{\partial \Re z}\Re\left[z_{x_1}\frac{\partial\log |z|}{\partial
        \nabla_{x_1} h }+z_{x_2} \frac{\partial\log
        |w|}{\partial \nabla_{x_1} h }\right]+ \frac{\partial f}{\partial \Im z}\Im\left[z_{x_1}\frac{\partial\log |z|}{\partial
        \nabla_{x_1} h }+z_{x_2} \frac{\partial\log
        |w|}{\partial \nabla_{x_1} h }\right]\\
    = \frac{\partial f}{\partial \Re z}\Re\left[z_{x_1}\frac{\partial\log z}{\partial
          \nabla_{x_1} h } + z_{x_2}\left(i\pi+\frac{\partial\log
            w}{\partial \nabla_{x_1} h }\right) \right]
 \\ +  \frac{\partial f}{\partial \Im z}\Im \left[z_{x_1}\frac{\partial\log z}{\partial
          \nabla_{x_1} h } + z_{x_2}\left(i\pi+\frac{\partial\log
            w}{\partial \nabla_{x_1} h }\right) \right].
    \end{multline}
    In the second step we used \eqref{eq:idKO}, in the third the identities
  \begin{eqnarray}
    \label{eq:using2}
    \partial_{x_1} f( z )&=& \Re z _{x_1}\frac{\partial f( z )}{\partial\Re z }+\Im z _{x_1}\frac{\partial f( z )}{\partial\Im z },\\
    \nonumber
    \partial_{x_2} f( z )&=& \Re z _{x_2}\frac{\partial f( z )}{\partial\Re z }+\Im z _{x_2}\frac{\partial f( z )}{\partial\Im z },
  \end{eqnarray}
    and in the last \eqref{eq:id2}.
    
    Note that, writing $z=z(w)$ and recalling the definition
    \eqref{eq:Delta} of $\Delta$, one has
\begin{eqnarray}
  \label{eq:zy}
  z_{x_2}=z'(w)w_{x_2}=z'(w)\left[\frac\Delta z-\frac wz z_{x_1}\right],
\end{eqnarray}
so that
\begin{eqnarray}
  \label{eq:tgw}
  \left[z_{x_1}\frac{\partial\log z}{\partial
          \nabla_{x_1} h } + z_{x_2}\left(i\pi+\frac{\partial\log
            w}{\partial \nabla_{x_1} h }\right) \right]=i \pi z_{x_2}+\frac\Delta z z'(w)\frac{\partial\log
            w}{\partial \nabla_{x_1} h },
\end{eqnarray}
because
\[
z\partial_{\nabla_{x_1} h }\log z-z'(w)w\partial_{\nabla_{x_1} h }\log w=\partial_{\nabla_{x_1} h } z-\partial_{\nabla_{x_1} h } z=0.
  \]
  Together with \eqref{eq:dlogz}, \eqref{eq:daz}, \eqref{eq:using2}, we obtain
  \begin{eqnarray}
    \label{eq:dlogz2}
    \partial_t\log z=\pi\left[i\partial_{x_2} f-\Im z_{x_2}\frac{\partial f}{\partial \Re z}+\Re z_{x_2} \frac{\partial f}{\partial \Im z}\right]+a= 2i\pi z_{x_2}\frac{\partial f}{\partial z}+a
  \end{eqnarray}
  with $\partial_z f=(1/2)(\partial_{\Re z}-i\partial_{\Im z})f$ and
  \begin{eqnarray}
    \label{eq:a}
    a=\frac{\partial f}{\partial \Re z}\Re\left[\frac\Delta z z'(w)\frac{\partial\log w}{\partial \nabla_{x_1} h }\right]
+    \frac{\partial f}{\partial \Im z}\Im\left[\frac\Delta z z'(w)\frac{\partial\log w}{\partial \nabla_{x_1} h }\right].
  \end{eqnarray}
  Similarly, one gets
  \begin{eqnarray}
\nonumber
    \partial_t\log w=\pi\left[-i\partial_{x_1} f+\Im z_{x_1}\frac{\partial f}{\partial \Re z}-\Re z_{x_1} \frac{\partial f}{\partial \Im z}\right]+b= -2i\pi z_{x_1}\frac{\partial f}{\partial z}+b
  \end{eqnarray}
with 
\begin{eqnarray}
    \label{eq:b}
  b=\frac{\partial f}{\partial \Re z}\Re\left[\frac\Delta z z'(w)\frac{\partial\log w}{\partial\nabla_{x_2} h }\right]
  +    \frac{\partial f}{\partial \Im z}\Im\left[\frac\Delta z z'(w)\frac{\partial\log w}{\partial\nabla_{x_2} h } \right].
  \end{eqnarray}

From \eqref{eq:Delta} we see that
  \begin{multline}
    \label{eq:dDt}
    \partial_t \Delta= 
    z w_{x_2} \left[ 2i\pi z_{x_2}\frac{\partial f}{\partial z}+a
      \right]+z \partial_{x_2}\left[w\left(-2i \pi z_{x_1}\frac{\partial f}{\partial z}+b\right)\right]\\+w z_{x_1}\left[-2i\pi z_{x_1}\frac{\partial f}{\partial z}+b\right]+w\partial_{x_1}\left[z\left(2i\pi z_{x_2}\frac{\partial f}{\partial z}+a\right)\right]
 .
\end{multline}
  Computing the derivatives in \eqref{eq:dDt}  and simplifying, one is left in the end with
  \begin{multline}
    \label{eq:dDt2}
    \partial_t \Delta=(a+b)\Delta+w z(a_{x_1}+b_{x_2})+2i \pi(z_{x_2}-z_{x_1})\Delta\frac{\partial f}{\partial z}
   \\ + 2i \pi z w
     \left[z_{x_2} \partial_{x_1}\frac{\partial f}{\partial z} -z_{x_1} \partial_{x_2} \frac{\partial f}{\partial z}\right]\\
    =(a+b)\Delta+w z(a_{x_1}+b_{x_2})+2i\pi (z_{x_2}-z_{x_1})\Delta\frac{\partial f}{\partial z}\\
 -\pi z w\left(\Im z _{x_2}\Re z _{x_1}-\Re z _{x_2}\Im z _{x_1}\right)\times\left(\frac{\partial^2 f( z )}{\partial \Re z ^2}+\frac{\partial^2 f( z )}{\partial \Im z ^2}\right),
  \end{multline}
  that is the main achievement of the computation.
  
  Let us prove Theorem \ref{thm4}. By assumption, $\Delta$ is zero at time zero and $\partial_t\Delta$ is also zero, so that
  \begin{eqnarray}
    \label{eq:st}
    0=z w\left(\Im z _{x_2}\Re z _{x_1}-\Re z _{x_2}\Im z _{x_1}\right)\times\left(\frac{\partial^2 f( z )}{\partial \Re z ^2}+\frac{\partial^2 f( z )}{\partial \Im z ^2}\right).
  \end{eqnarray}
  Note that $(\Im z _{x_1}\Re z _{x_2}-\Re z _{x_1}\Im z _{x_2})$ vanishes
  only if $ z _{x_2}/ z _{x_1}$ is real. On the other hand,  \eqref{eq:zy} implies that if 
  $\Delta=0$ then
  \begin{eqnarray}
    \label{eq:zy2}
    \frac{z_{x_2}}{z_{x_1}}=-z'(w)\frac wz=-\frac{w P_w(z,w)}{z P_z(z,w)}
  \end{eqnarray}
  and, as shown in \cite[Sec. 2.3]{KO}, the r.h.s. belongs to $\mathbb C\setminus \mathbb R$.
  The claim of the theorem then follows.

  As for Theorem \ref{thm5}, recall from definitions \eqref{eq:a} and
  \eqref{eq:b} that $a,b$ are linear in $\Delta$ and that $f$ is
  harmonic by assumption.  If we consider $ z,w $ as  known functions
  of space and time, then
  one sees that the r.h.s. of \eqref{eq:dDt2} is Lipschitz in
  $\Im\Delta,\Re\Delta$ and its derivatives w.r.t. $x_1,x_2$. 
  As long as $ z $ is sufficiently regular in space and time so
  that all derivatives exist, we see that $\Delta$ remains zero if it
  is zero initially, as it is the solution of an initial value problem with Lipschitz right-hand side.
 
  \end{proof}

\subsection{A couple of concrete examples}
\label{sec:lozenge}
For the dimer model on the honeycomb lattice with uniform weights, the
characteristic polynomial is \cite{Kenyonnotes} $P(z,w)=z+w-1$ and
(with a conventional choice of coordinates on the honeycomb lattice)
the Newton polygon $N$ is the triangle with vertices
$(0,0),(1,0),(0,1)$. The condition $P(z,w)=0$ implies $w=1-z$, and the
mapping from $z$ to $\nabla h$ induced by \eqref{eq:nablazw} is a
bijection from the upper half complex plane $\mathbb H$ to $N$. The $z$-to-slope mapping is
illustrated in Figure \ref{fig:triangolo}.
\begin{figure}
\begin{center}
\includegraphics[height=4cm]{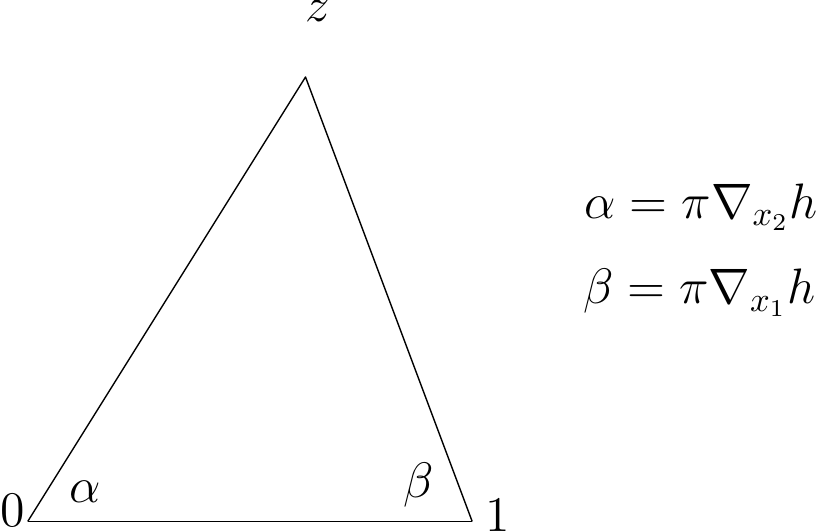}
\caption{The correspondence between $z\in\mathbb H$ and $\nabla h\in N$ for the honeycomb lattice.}
\label{fig:triangolo}
\end{center}
\end{figure}
Some of the two-dimensional growth models mentioned in the
introduction, notably those defined in \cite{BF08} and \cite{BBO}, are
known to preserve the Gibbs measures of the honeycomb dimer model.
For these models, the speed of growth turns out to be a harmonic
function of $z$ that, once expressed in terms of $\nabla h$, can be
checked by direct computation to have AKPZ signature:
$\det (D^2 v)\le0$.  Thanks to our Theorems \ref{thm5} and \ref{thm1},
AKPZ signature follows simply by harmonicity.

Another natural example is when $G$ is the square grid $\mathbb Z^2$
with unit weights. With a natural choice of Kasteleyn matrix one
finds the characteristic polynomial
\[P(z,w)=-1+\frac1z+\frac1w+\frac1{zw}.
\]
Then, $P(z,w)=0$ gives $w=(z+1)/(z-1)$, and one easily checks that relations \eqref{eq:nablazw} give again a
bijection between the upper half-plane and the Newton polygon, that in
this case is the square $N$ with vertices
$(0,0),(1,0),(0,1),$ $(1,1)$. See Fig. \ref{fig:quadrato} and also the discussion in \cite[Sec. 2.4 and Fig. 3]{CFT}.  As we
mentioned in the introduction, the two-dimensional domino-tiling
growth model defined in Section 3.1 of \cite{T15} admits the Gibbs
measures $\pi_\rho$ of the dimer model on $\mathbb Z^2$ as stationary
states. The speed of growth $v(\cdot)$ was computed in \cite{CFT}:
while it has quite a complicated expression in terms of $\nabla h$,
see \cite[Eq. (2.6)]{CFT}, once expressed in terms of $z$ it equals
just $\pi^{-1}\Im z$. In \cite[App. B]{CFT}, a lengthy but direct
computation shows that $\det(D^2 v)\le 0$; again, this can be obtained
as an immediate consequence of our present results.
\begin{figure}
\begin{center}
\includegraphics[height=4cm]{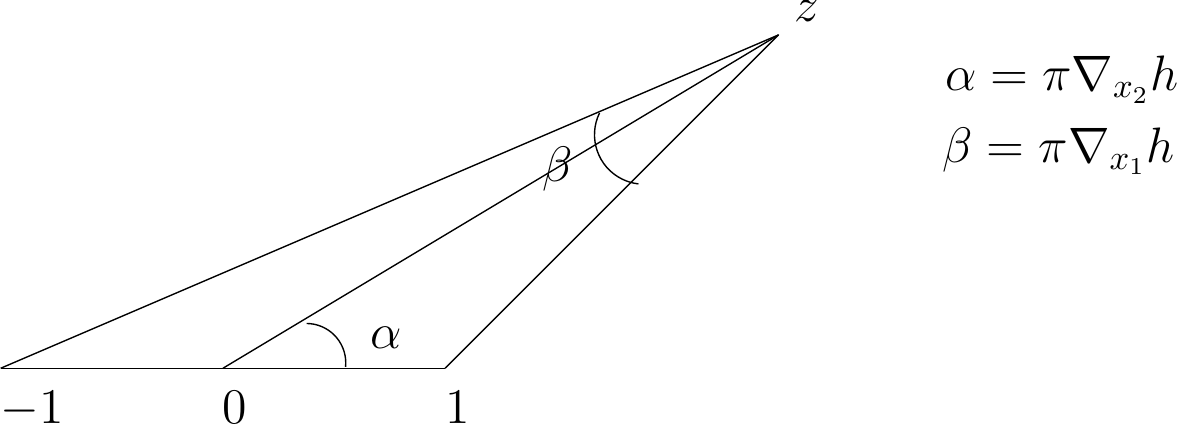}
\caption{The correspondence between $z\in\mathbb H$ and $\nabla h\in N$ for the square grid.}
\label{fig:quadrato}
\end{center}
\end{figure}

\section*{Acknowledgements}
We are grateful to Rick Kenyon, Senya Shlosman and Herbert Spohn for
very helpful suggestions.  A.B. was partially supported by the NSF
grant DMS-1607901 and DMS-1664619.  F.T. was partially supported by
the CNRS PICS grant ``Interfaces al\'eatoires discr\`etes et
dynamiques de Glauber'', by ANR-15-CE40-0020-03 Grant LSD and by
MIT-France Seed Fund ``Two-dimensional Interface Growth and
Anisotropic KPZ Equation''.

\end{document}